\newcommand{\sampledfrom}{\overset{{\scriptscriptstyle\$}}{\leftarrow}}
\newcommand{\Z}{\mathbb{Z}}
\newcommand{\N}{\mathbb{N}}
\newcommand{\F}{\mathbb{F}}
\newcommand{\poly}{\mathrm{\,poly}}
\newcommand{\inner}[1]{\langle#1\rangle}
\newcommand{\mc}[1]{\mathcal{#1}}
\newlist{stepenumerate}{enumerate}{2}
\setlist[stepenumerate,1]{label=\arabic*.,ref=\arabic*}
\setlist[stepenumerate,2]{label=\arabic{stepenumeratei}.\arabic*.,ref=\arabic{stepenumeratei}.\arabic*}
\newcommand{\sparsedotcost}{\tilde{\mathfrak{d}}}
\newcommand{\matmulcost}{\mathfrak{m}}
\newcommand{\subspacedim}{n_1}\newcommand{\noiserate}{\mu}
\title{Practical Secure Delegated Linear Algebra with Trapdoored Matrices}
\author{Mark~Braverman\orcidID{0000-0003-1276-6081} \and Stephen~Newman\thanks{Corresponding author: \href{mailto:stephen.newman@princeton.edu}{stephen.newman@princeton.edu}}\orcidID{0000-0003-1948-9478}}
\authorrunning{M. Braverman and S. Newman}
\institute{Princeton University}
\begin{document}
\maketitle

\begin{abstract}
	Most heavy computation occurs on servers owned by a second party. This reduces data privacy, resulting in interest in data-oblivious computation, which typically severely degrades performance. Secure and fast delegated computation is particularly important for linear algebra, which comprises a large fraction of total computation and is best run on highly specialized hardware often accessible only through the cloud. 
    
    We state the natural efficiency and security desiderata for fast and data-oblivious delegated linear algebra. We demonstrate the existence of \textit{Trapdoored}-\textit{Matrix} families based on an LPN assumption, and provide a scheme for secure delegated matrix-matrix and matrix-vector multiplication based on the existence of trapdoored matrices. We achieve sublinear overhead for the server, dramatically reduced computation for the client, and various practical advantages over previous protocols.

    \keywords{Data Privacy, Data-Oblivious Computation, Delegation, Homomorphic Encryption, Cloud Computing, Sublinear Overhead, LPN, Matrix Multiplication, Trapdoored Matrix}
\end{abstract}

\section{Introduction}

Data-oblivious delegated computation -- computation in which the server does not learn anything meaningful about the input data, thanks to cryptographic scrambling by the client -- has been of theoretical and practical interest for some time \cite{pippenger1979relations,ostrovsky1990efficient}. Arbitrary computation is quite difficult in this setting, requiring Fully Homomorphic Encryption (FHE) \cite{rivest1978data,gentry2009fully} and currently incurring several orders of magnitude of overhead. As a result, much work on oblivious computation focuses on either advancing schemes for oblivious computation with more limited adversaries \cite{maas2013phantom,zahur2015obliv}, on developing and improving the runtime of FHE \cite{fan2012somewhat}, or on Partially Homomorphic Encryption (PHE) \cite{elgamal1985public,goldwasser2019probabilistic}, which focuses on encryption that commutes with \textit{some} operations.

A second paradigm, with different uses, has also emerged. Secure Multiparty Computation (MPC) \cite{yao1986generate,goldreich1987solve}, in which several non-colluding adversaries compute on a union of datasets private to each server, has allowed for a variety of recent developments in hidden computation protocols. Here, a combination of informational and cryptographic techniques often offer efficiency improvements over corresponding PHE problems~\cite{mohassel2017secureml,araki2016high,keller2020mp,knott2021crypten}.

Much work in FHE and MPC has focused on lattice techniques. Learning With Errors (LWE), a survey of which is available in \cite{regev2010learning}, is the most well-known cryptographic problem in this area. Broadly speaking, the problem asks a user to distinguish a sequence of random values from a sequence of low-magnitude-noised inner products of known vectors $a_i$ against some unknown vector $s$. Learning Parity with Noise (LPN), a variant problem, instead asks that the noise be sparse.

In recent decades, an increasing fraction of computational power has been dedicated to matrix-vector and matrix-matrix multiplications. Specialized hardware to perform these operations is expensive and power-hungry, and is therefore unusually centralized. We often wish to allow low-power devices (such as personal computing hardware) to perform large matrix-vector multiplications (in evaluating artificial neural networks, for instance) or to allow companies to rent computing power from centralized providers. However, input data must often be kept private (e.g. medical records, personal photos). We therefore desire schemes for delegated matrix-vector and matrix-matrix multiplication that are both secure and highly efficient -- when even insecure computation is quite expensive, a constant-factor overhead is extremely costly. Techniques using exclusively operations native to modern GPUs/TPUs are preferred, as they benefit from extensive hardware and software optimization.

\subsection{Our Contributions}

We wish to conduct delegated computations of the form $Av_i$ for some prespecified matrix $A$ and online vectors $v_i$, or matrix-matrix multiplications $AB$, while hiding $A, B, v_i$ from the server and minimizing client and total computational load. We propose a protocol for these problems using two novel cryptographic primitives: {\em trapdoored matrices}, matrices which are cryptographically pseudorandom but which can be multiplied by other matrices quickly using trapdoor information, and {\em targeted trapdoored matrices}, a slight generalization of trapdoored matrices that amortizes certain costs. At a high level, our protocol uses trapdoored matrices to additively mask operands before requesting their product from the server, and then uses the trapdoor property to quickly calculate the product of the unmasked operands from the product of the masked operands.

We then provide constructions of trapdoored objects, and an associated protocol, under a standard LPN hardness assumption. We also implement and benchmark our protocol. It is lightweight and easy to implement, secure over any LPN-friendly ring or field, and uses only standard matrix and vector operations, rendering it highly amenable to compiler and hardware acceleration. As a result, it achieves both theoretically and practically small complexity blowup over computation in the clear while substantially reducing client computation.

\subsection{Related Work}
\cite{sotiraki2016authentication}\footnote{We were unaware of this work in a previous version and apologize for the omission.} proposed a different version of trapdoored matrices, including a candidate construction, while studying identity verification. In this work, the definition of trapdoored matrices is not made exact, but the properties required are far weaker. Instead of requiring trapdoored matrices to be pseudorandom, this work merely requires that it be challenging to compute information enabling quick multiplication from inspection of a randomly sampled trapdoor matrix.

Several recent papers have focused directly on multiplication of matrices when one or both are encrypted, such as \cite{jiang2018secure,bae2024plaintext,liu2022privacy,gao2024secure}. \cite{liu2022privacy,bae2024plaintext} in particular achieve efficient computation and decoding in practical parameter regimes, but require expensive (as compared to computation) encoding steps. Many of these works also use LWE variants and incur a small accuracy error as a result.

LPN-type problems were first discussed in \cite{blum1993cryptographic} and considered over finite fields in \cite{ishai2009secure}. Algorithms for binary problems were considered in \cite{kearns1998efficient}, and best-known quantitative hardness results for LPN over rings are discussed in \cite{liu2024hardness}.

The link between oblivious linear function evaluation (OLE) and LPN was identified by \cite{ishai2009secure} (while we do not consider MPC or OLE directly, this line of work represents some of the more advanced uses of LPN to date). \cite{dottling2017tinyole,boyle2018compressing,boyle2019efficient,boyle2020efficient} further demonstrated the usefulness of LPN for secure linear computation in MPC. In particular, \cite{boyle2019efficient} presented a MPC protocol that can be turned into a vector-secure matrix-vector multiplication protocol, but focused on goals corresponding to the MPC setting. \cite{applebaum2017secure,applebaum2023actively} consider low-overhead MPC linear-algebra, including a protocol obtaining constant-factor overhead for secure two-party arithmetic computation with semi-honest adversaries. \cite{chen2020maliciously} used alternate techniques to achieve matrix multiplication with linear communication in an all-but-one-adversarial MPC framework.

Secure matrix-vector multiplication is useful in various contexts, including evaluation of neural networks \cite{mohassel2017secureml,dalskov2019secure,mann2023towards}. 

In a concurrent and independent work \cite{vaikuntanathan2025improving}, Vaikuntanathan and Zamir suggest a recursive construction of trapdoored matrices very similar to that of our Subsection~\ref{sec:improvedlpn} and use it to accelerate algorithms that use random matrices. Their construction can be used to slightly improve our server cost for matrix-matrix multiplication (removing the term in $\delta$) while approximately doubling client cost.

A very recent follow-up work \cite{benhamouda2025encrypted} has since provided alternate Trapdoored-Matrix and Targeted-Trapdoored-Matrix-like constructions using classical and novel cryptographic assumptions, enabling significant storage and performance improvements in some domains (see Section~\ref{sec:experiments} for discussion). We expect further improved Trapdoored-Matrix constructions, with associated improvements in linear algebra delegation, to be developed in the future.

\section{Fast Delegated Computation}
We work within the model of secure delegated computation problems. We assume two parties: a client with private data (e.g. two matrices, or a matrix and an online stream of vectors) and a server. The client wishes to learn some public function (e.g. matrix product) of their data while computationally hiding it from the server.

Since we consider clients that are much less computationally able than servers, we are interested in protocols which achieve the following properties:
\begin{itemize}
	\item Total protocol computation is very low (at most low-factor linear in, and ideally asymptotically equal to, the naive computational cost of the function).
	\item Total computation by the client is much less than that required to compute the function on its own (and ideally, nearly linear in input size).
	\item The protocol is computational zero-knowledge for dishonest servers.
    \item The client can detect dishonest server behavior.
    \item The protocol minimizes communication and round count.
\end{itemize}

For efficiency, we compare computational load to that of unencrypted delegation:
\begin{definition}
	We say that a client-server protocol in which the client receives data $A$ and is required to return prespecified $f(A)$ is $(c_c, c_s)$-efficient if the client and the server respectively expend $c_c$ and $c_s$ times their respective computational costs in the naive protocol in which the client sends $A$ to the server and the server computes and returns $f(A)$ to the client.
\end{definition}
$c_c$ and $c_s$ will typically depend on parameters (e.g. problem size, security parameter). An ideal protocol would have $c_c=c_s=1$, an efficient protocol has $c_c$ small and $c_s=1+o(1)$, and a protocol beats local computation if $c_c$ is less than the cost of computing the result locally divided by the total input-output size.

\section{Three Problems and One Solution}
\label{sec:mainthm}

We consider two particularly salient problems of linear algebra:
\begin{definition}[Matrix-Matrix]
	Given matrices $A\in R^{m\times n}, B\in R^{n\times l}$ over a ring $R$, return $AB$.
\end{definition}
\begin{definition}[Matrix-Vectors]
	Given a $m\times n$ matrix $M$ over a ring $R$ and an online stream $v_1, v_2, \dots\in R^n$, return $Mv_1, Mv_2, \dots$ online.
\end{definition}

Note that a secure delegation protocol for Matrix-Vectors yields a natural protocol for Matrix-Matrix (by multiplying one column of $B$ at a time), but that protocol does not enjoy the same efficiency as the Matrix-Vectors protocol, as the unencrypted complexity of Matrix-Matrix is lower than $n$-repeated Matrix-Vectors.

These jointly generalize to the following third problem:
\begin{definition}[Matrix-Matrices]
	Given $A\in R^{m\times n}$ and an online stream of matrices $B_i\in R^{n\times l}$ over a ring $R$, return the online stream $AB_i$.
\end{definition}

Our protocols for this problem have two phases: an initialization phase, in which the client and server interact for preprocessing purposes, and an online phase, in which the client and server interact after every online input.

Our results depend on an LPN hardness assumption (in Section~\ref{sec:lpn-assumption}), and have performance directly proportional to the strength of that assumption. To state performance concisely, let $\nu(\delta, \epsilon, \lambda)=\max\left(\frac{\lambda}{\lg |R|}, \iota\right)$, where $R$ is the ring over which we compute and $\iota$ is the smallest value such that LPN in dimension $\frac{\iota}{\delta}$ with subspace dimension $\iota$ and noise rate $\left(\frac{\iota}{\delta}\right)^{\epsilon-1}$ obtains $(\lambda+\lg \lg n+1)$-bit-security CZK. We then show the following:

\begin{theorem}
    \label{thm:full}
    For any ring $R$ over which the LPN assumption holds, Protocol~\ref{prot:improved} is a secure delegated Matrix-Matrices protocol. It guarantees, for any security parameter $\lambda$ and $\delta, \epsilon\in (0, 1)$ such that the LPN assumption holds:
    \begin{itemize}
        \item $\lambda$-bit-security CZK.
        \item Initialization costs \[O\left(n(m+n)\left(n^\epsilon + \nu(\delta, \epsilon, \lambda)^{\omega-2}\right)\right)\] and \[O\left(n^\omega \delta^{\omega-1} + n \max(m, \delta n)\min(m, \delta n)^{\omega-2}\right) \] to the client and server respectively.\footnote{Where $\omega$ is the matrix multiplication constant.}
        \item Per-round cost \[O\left((m+n)l\left(n^{\epsilon} + \nu(\delta, \epsilon, \lambda)\right)\right)\] for the client and per-round overhead (i.e. excluding the cost of a single naive-sized $m\times n$-by-$n\times l$ multiplication) \[O(n\max(\delta n, l) \min(\delta n, l)^{\omega-2})\] for the server.
        \item Two-round initialization and one-round online steps.
        \item Total communication of $1+O\left( \frac{\delta n}{m+l}\right)$ times input size.
    \end{itemize}
    In particular, in parameter regimes where $m=\Omega(n)$, the protocol is
    \[\left(n^\epsilon+\nu(\delta, \epsilon, \lambda), 1+O\left(\delta^{\omega-2}\right)\right)\]
    efficient in both Matrix-Matrix and $\max(n^\epsilon, (\delta n)^{\omega-2})$-steps-amortized Matrix-Vectors.
\end{theorem}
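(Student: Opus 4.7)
The plan is to verify each of the theorem's bullet points by inspecting Protocol~\ref{prot:improved} component-by-component, leveraging the Trapdoored-Matrix primitive developed in Subsection~\ref{sec:improvedlpn} as a black box wherever possible. At a high level, the protocol should have the client sample trapdoored matrices $T_A$ and $T_B$ of appropriate dimensions, send $A + T_A$ and $B_i + T_B^{(i)}$ to the server, receive the masked product $(A+T_A)(B_i+T_B^{(i)})$, and then exploit the trapdoors to subtract off $T_A B_i + A T_B^{(i)} + T_A T_B^{(i)}$ cheaply. I would first write down this decomposition explicitly and verify correctness: the unmasking is a purely algebraic identity, so this step is essentially free once the protocol description is fixed.

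Next I would attack the security claim. For CZK, the server's entire view is the transcript of masked matrices and a targeted trapdoor request; by the defining property of trapdoored matrices, each of these masked matrices is computationally indistinguishable from a uniform ring element sequence under LPN. I would set up a standard hybrid argument walking through the online rounds, replacing each $A+T_A$ or $B_i+T_B^{(i)}$ by a uniform matrix in turn. The $\lg\lg n$ slack baked into the definition of $\nu(\delta,\epsilon,\lambda)$ is precisely what absorbs the union bound across the recursive levels of the trapdoored-matrix construction (which is presumably why $\iota$ is defined to yield $(\lambda + \lg\lg n + 1)$-bit security rather than $\lambda$-bit). I would also note that detection of dishonest behavior and the round counts fall out directly by inspection of the protocol description.

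The bulk of the work is the cost accounting. I would split it into four buckets. (i) \emph{Client initialization}: sampling an $m\times n$ (and $n\times n$) trapdoored matrix costs $O(n(m+n))$ per unit of ``trapdoor depth,'' producing the $n^\epsilon$ factor from the recursive construction and the $\nu^{\omega-2}$ factor from the base-case dense multiplication in dimension $\nu$. (ii) \emph{Server initialization}: the server's preprocessing consists of one multiplication involving the padding block of size $\delta n$, giving $n^\omega \delta^{\omega-1}$, plus a rectangular multiplication with the $m\times n$ masked matrix, giving the $\min/\max$ term. (iii) \emph{Per-round client}: applying a trapdoor to an $n\times l$ matrix costs $O((m+n)l)$ times the trapdoor overhead, yielding $n^\epsilon + \nu$. (iv) \emph{Per-round server overhead}: beyond the naive $m\times n$-by-$n\times l$ multiplication, the server only handles an extra padding block of width $\delta n$, giving the claimed rectangular cost. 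Communication is then read off directly: each message has size equal to one operand plus the padding block of width $\delta n$, yielding a $1 + O(\delta n/(m+l))$ blowup. The concluding corollary about $(c_c,c_s)$-efficiency in the $m = \Omega(n)$ regime is a direct substitution of these bounds into the definition.

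The main obstacle I expect is making the security hybrid tight enough that the stated $\lambda$-bit bound actually falls out, since a naive hybrid across the levels of the recursive trapdoor construction and across online rounds could easily cost an additional $\lg n$ or $\lg l$ factor. The definition of $\nu$ with its $\lg\lg n$ padding suggests that the recursion has only $O(\lg n)$ levels and that online rounds must be handled by re-randomizing only the $B_i$ mask rather than resampling the $A$ trapdoor. I would therefore be careful to structure the hybrid so that the $A$-side mask is replaced by uniform exactly once (absorbed into the initialization CZK budget) and each $B_i$ mask contributes only to the per-round loss, confirming that the global reduction preserves $\lambda$-bit security under the stated LPN parameters.
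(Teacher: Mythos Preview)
Your cost accounting and overall correctness sketch are in line with the paper's analysis in Appendix~\ref{sec:performance}. The gap is in the security argument.

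You propose to treat each mask $A+T_A$, $B_i+T_B^{(i)}$ as an independent trapdoored-matrix instance and walk a hybrid across online rounds. This does not match the protocol: for efficiency, all masks share the \emph{same} public LPN subspace matrices $L_1,\dots,L_d$ (otherwise the client would have to recompute $AL_1\cdots L_i$ every round). So the $T_B^{(i)}$ are not independent black-box trapdoored matrices, and a per-round hybrid would both require correlated-instance security you have not established and incur a $\lg(\text{rounds})$ loss that is not budgeted anywhere in $\nu$.

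The paper avoids both issues with a concatenation trick you are missing. Because every mask has the common form $L_1\cdots L_d X + \sum_i L_1\cdots L_i Y_{i+1}$ with fresh $X,Y_i$ but shared $L_i$, the transpose of $A'$ and all the $B_i'$ can be stacked side by side into a single $n\times \poly(n)$ matrix of the same nested-LPN form. One then runs a hybrid over the $d$ recursion levels only (the lemma on the distributions $\mc{E}_j$), each step invoking the LPN assumption once with $\poly(n)$ sample columns. This gives $(\lambda-\lg d)$-bit indistinguishability, and since $d=O(\log_{1/\delta} n)$ the $\lg\lg n$ padding in $\nu$ absorbs exactly $\lg d$, not any per-round term. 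Your instinct that the slack is for recursion levels is right; the fix is to eliminate the per-round hybrid entirely via concatenation, not to separate $A$-side from $B$-side losses.
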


Two special cases are of particular interest.
\begin{corollary}
    Protocol~\ref{prot:improved} restricts to a two-round\footnote{A minor modification is required to parallelize independent server calls.} secure delegated $n$-by-$n$ matrix-matrix protocol with client cost $O(n^{2}(n^{\epsilon}+\nu(\delta, \epsilon, \lambda)))$, and server cost beyond a single $n$-by-$n$ matrix multiplication of $O\left(\delta^{\omega-2} n^\omega\right)$.
\end{corollary}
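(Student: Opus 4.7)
The plan is to derive this corollary as a direct specialization of Theorem~\ref{thm:full} to the case $m = n = l$, together with a single online step (i.e., one matrix in the online stream). Matrix-Matrix multiplication of two $n \times n$ matrices is exactly the Matrix-Matrices problem with one online input, so I can simply add the initialization bounds to a single round's bounds.

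First I would substitute $m = n$ into the initialization client cost $O(n(m+n)(n^\epsilon + \nu(\delta,\epsilon,\lambda)^{\omega-2}))$, obtaining $O(n^2(n^\epsilon + \nu^{\omega-2}))$, and then add the single-round client cost $O((m+n)l(n^\epsilon+\nu))$ with $m=l=n$, giving $O(n^2(n^\epsilon+\nu))$. Since $\nu \geq 1$ (it is at least $\lambda/\lg|R|$) and $\omega < 3$, we have $\nu^{\omega-2} \leq \nu$, so the online term dominates and the client cost is $O(n^2(n^\epsilon+\nu(\delta,\epsilon,\lambda)))$ as claimed.

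Next I would handle the server overhead. Substituting $m=n$ into the initialization bound $O(n^\omega\delta^{\omega-1} + n\max(m,\delta n)\min(m,\delta n)^{\omega-2})$ and using $\delta \in (0,1)$ (so $\delta n < n = m$), we get $O(n^\omega\delta^{\omega-1} + n^2 (\delta n)^{\omega-2}) = O(\delta^{\omega-2} n^\omega)$, where the second term dominates because $\delta^{\omega-2} > \delta^{\omega-1}$ for $\delta < 1$. The per-round server overhead $O(n\max(\delta n, l)\min(\delta n, l)^{\omega-2})$ at $l=n$ likewise simplifies to $O(\delta^{\omega-2} n^\omega)$. Adding the two gives a total overhead of $O(\delta^{\omega-2} n^\omega)$ beyond the single $n\times n$ multiplication itself.

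Finally I would address the round count. Theorem~\ref{thm:full} gives a two-round initialization plus a one-round online step. For a single $n$-by-$n$ Matrix-Matrix call, the only obstacle is showing that the online round can be merged with (or scheduled alongside) the second initialization round, since the server calls generated by initialization and by the online step can be issued in parallel once the necessary masks have been chosen; this is the modification alluded to in the footnote, and is the only non-mechanical step of the argument. After this merge, the protocol uses two rounds total, completing the proof.
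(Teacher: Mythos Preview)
Your proposal is correct and follows exactly the approach the paper intends: the corollary is stated immediately after Theorem~\ref{thm:full} as a direct specialization with $m=n=l$ and a single online step, and your substitution and domination arguments (in particular $\nu^{\omega-2}\le \nu$ and $\delta^{\omega-1}\le \delta^{\omega-2}$) are precisely the intended ones. Your treatment of the round count also matches the footnote's meaning: once the round-1 products $L_1\cdots L_i$ are known, the client can form both $A_{\mathbf{enc}}$ and $B_{\mathbf{enc}}$ and issue all remaining server requests in a single batched second round.
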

\begin{corollary}
    Protocol~\ref{prot:improved} restricts to a two-round-per-initialization, one-round-per-online-step secure delegated $n$-by-$n$ matrix-vectors protocol with:
    \begin{itemize}
        \item Client online cost $O(n(n^\epsilon+\nu(\delta, \epsilon, \lambda)))$
        \item Client initialization cost $O(n^{2}(n^\epsilon+\nu(\delta, \epsilon, \lambda)^{\omega-2}))$
        \item Server online cost beyond a single $n$-by-$n$ matrix-vector multiplication of $O\left(\delta n^2\right)$
        \item Server initialization cost of $O(\delta n^{\omega})$
    \end{itemize}
\end{corollary}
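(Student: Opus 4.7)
The plan is to derive this corollary as a direct specialization of Theorem~\ref{thm:full} under the parameter substitution $m=n$ and $l=1$, invoking the fact that Matrix-Vectors is precisely the Matrix-Matrices problem with each streamed $B_i$ having width $1$. Security, round complexity, and dishonest-server detection all carry over immediately from the general theorem, so the remaining work is purely algebraic bookkeeping on the four efficiency bounds.

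For the online costs, I would substitute into the per-round client bound: $O((m+n)l(n^\epsilon + \nu(\delta,\epsilon,\lambda)))$ becomes $O(n(n^\epsilon + \nu(\delta,\epsilon,\lambda)))$. For the server overhead $O(n\max(\delta n, l)\min(\delta n, l)^{\omega-2})$, setting $l=1$ and noting $\delta n \ge 1$ in the regime of interest yields $O(n \cdot \delta n \cdot 1^{\omega-2}) = O(\delta n^2)$, matching the claim.

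For initialization, the client bound $O(n(m+n)(n^\epsilon + \nu^{\omega-2}))$ collapses directly to $O(n^2(n^\epsilon + \nu^{\omega-2}))$ under $m=n$. The server initialization in Theorem~\ref{thm:full} is $O(n^\omega\delta^{\omega-1} + n\max(m,\delta n)\min(m,\delta n)^{\omega-2})$; with $m=n$ and $\delta<1$ the second summand evaluates to $\delta^{\omega-2}n^\omega$, which is strictly worse than the $O(\delta n^\omega)$ stated here (since $\omega-2<1$ for the current best $\omega$). To close this gap I would revisit the Matrix-Vectors restriction of Protocol~\ref{prot:improved}: the $\min(m,\delta n)^{\omega-2}$ factor reflects a preprocessing step that couples $A$ with the width-$l$ side of the computation, and so, when $l=1$, it is absorbed into the per-online-step overhead (where it indeed produces exactly the $O(\delta n^2)$ cost computed above). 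Only the trapdoor-setup cost $O(n^\omega\delta^{\omega-1}) \le O(\delta n^\omega)$, using $\omega-1 \ge 1$ and $\delta \le 1$, remains as genuine initialization.

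The main point requiring care is precisely this redistribution of work between the initialization and online phases: I would need to confirm, line by line on Protocol~\ref{prot:improved}, that the $A$-against-$B$ preprocessing step truly decomposes per-column and so reduces to the online overhead already bounded, rather than being a one-shot $l$-independent cost that would leave server initialization stuck at $\delta^{\omega-2}n^\omega$. Once that compositional structure is verified, the four bullets and the round counts all follow by direct substitution into Theorem~\ref{thm:full}.
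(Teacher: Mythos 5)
Your substitutions for the two client bounds and for the server online overhead are exactly what the paper does: the corollary is obtained by restricting Theorem~\ref{thm:full} (equivalently, the ``Online Matrix-Vectors'' paragraph of Appendix~\ref{sec:performance}) to $m=n$, $l=1$, and those three bullets follow by plugging in. The round counts and security also carry over as you say. The problem is the fourth bullet, and your proposed repair of it does not work.

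You correctly observe that direct substitution into the theorem's server initialization bound gives $O(n^\omega\delta^{\omega-1} + \delta^{\omega-2}n^\omega) = O(\delta^{\omega-2}n^\omega)$, which exceeds the stated $O(\delta n^\omega)$ whenever $\omega<3$ and $\delta<1$. But your fix --- reclassifying the $n\max(m,\delta n)\min(m,\delta n)^{\omega-2}$ term as per-column work that migrates into the online overhead when $l=1$ --- misreads where that term comes from. It is the cost of the server computing the partial products $A_{\mathbf{enc}}L_1,\ A_{\mathbf{enc}}L_1L_2,\ \dots,\ A_{\mathbf{enc}}L_1\cdots L_d$ (Initialization line 6 of Protocol~\ref{prot:improved}), dominated by the single $n\times n$-by-$n\times \delta n$ product $A_{\mathbf{enc}}L_1$ at cost $\Theta(\delta^{\omega-2}n^\omega)$. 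This computation involves only $A$ and the subspace matrices $L_i$, whose shapes $n_{i-1}\times n_i$ are independent of $l$; it is performed once, before any $B_i$ arrives, and has no per-column structure to decompose. It cannot be absorbed into the $O(\delta n^2)$ per-round overhead. The honest conclusion of your derivation is that the server initialization cost is $O(\delta^{\omega-2}n^\omega)$ (amortizing over $(\delta n)^{\omega-2}$ rather than $\delta n$ online rounds); this matches the corollary's $O(\delta n^\omega)$ only under naive multiplication ($\omega=3$), and the discrepancy is an imprecision inherited from the paper's own special-case paragraph rather than something a different accounting of Protocol~\ref{prot:improved} can eliminate.
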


\section{Primitives for Fast Linear Algebra}
Our basic approach is a delegation-oriented modification of Beaver's triples \cite{beaver1992efficient}: to secure the computation $AB_i$, generate pseudorandom matrices $A'$ and $B'_i$. Request the computation $(A+A')(B_i+B'_i)$ from the server (sending the results of the sums and requesting their product), and return $(A+A')(B_i+B'_i)-A'B_i-(A+A')B'_i$ (where the latter two terms are computed locally). While this is clearly secure by the usual argument, it is not clearly efficient: it requires two local matrix multiplications. To improve efficiency, we must pick $A'$ and $B'_i$ to reduce the cost of these operations, inspiring the following cryptographic primitive generalizing an idea of \cite{sotiraki2016authentication}.

\begin{definition}[Trapdoored-Matrix]
    A Trapdoored-Matrix scheme is an algorithm that, given a ring $R$ and size $(m, n)$, generates a pseudorandom $m\times n$ matrix $M$ over $R$ and additional data such that, for any $N\in R^{n\times l}$, $MN$ can be computed more efficiently than via naive multiplication.\footnote{\cite{sotiraki2016authentication} proposes a weaker construction under a nonstandard security assumption. A construction of \cite{boyle2020efficient} (the first construction with the Toeplitz/quasi-cyclic assumption in Section 10.3) also fulfills these criteria, albeit with significantly stronger LPN assumptions and worse performance at equal security levels (as it was optimized for a substantially different task). A work concurrent to and independent of ours \cite{vaikuntanathan2025improving} and a work subsequent to ours \cite{benhamouda2025encrypted} propose other constructions.}
\end{definition}

This scheme suffices for matrix-matrix multiplication. However, we also wish to delegate matrix-vector multiplication, and therefore to develop pseudorandom trapdoored vectors whose product with an arbitrary matrix $A$ can be computed in time much less than the naive product. This is impossible for one-shot instances, as the time to compute $Av$ for arbitrary $v$ is asymptotically equal to the time required to read $A$; therefore, we consider a further generalization of the above.

\begin{definition}[Targeted-Trapdoored-Matrix]
    A Targeted-Trapdoored-Matrix scheme is an algorithm that, given a ring $R$, size $(m, n)$, $l\geq N((m, n, R))\in \Z$, and a matrix $M\in R^{n\times l}$, preprocesses $M$ and returns an online generator that generates pseudorandom $m\times n$ matrices $B_i$ and the associated values $B_iM$ in less time than is required to draw a uniformly random $B_i$ and compute the products $B_iM$.
\end{definition}

Combined, these make the template scheme for Matrix-Matrices fast even when $B_i$ is a vector: use the Trapdoored-Matrix generator to generate $A'$, and use the Targeted-Trapdoored-Matrix generator to generate the various $B'_i$.

At the cost of $O((n+m)tl)$ additional storage and of beginning the protocol with an $m\times n$-by-$n\times tl$ secured delegated matrix-matrix computation, we can obtain amortized Targeted-Trapdoored-Matrix for a $t$-matrix stream from Trapdoored-Matrix: generate a trapdoored matrix $M$, compute $AM$, and then use blocks of $M$ and $AM$ as the $B'_i$ and $AB'_i$, respectively.

We know of many families of matrices (e.g. Vandermonde, Cauchy, and discrete Chebyshev matrices) for which multiplication is $\tilde{O}(n)$. Since linear combinations of these or other matrices admitting fast multiplication could plausibly be pseudorandom, it seems natural to believe the following conjecture\footnote{An update: a subsequent work, \cite{benhamouda2025encrypted}, provides conjectured-secure Trapdoored-Matrix constructions with $O(n)$ multiplication cost.}:
\begin{conjecture}
\label{conj:hiddenFastMatrix}
	There exist Trapdoored-Matrix constructions with polynomial-time computational security, preprocessing cost $\tilde{O}(n^2)$ and matrix-vector multiplication cost $\tilde{O}(n)$.
\end{conjecture}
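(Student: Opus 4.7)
The plan is to construct $M$ as a cryptographically-hidden combination of highly-structured matrices that individually admit $\tilde{O}(n)$ matrix-vector multiplication. Concretely, fix a small collection $\mathcal{F}$ of matrix types (Vandermonde, Cauchy, circulant, Toeplitz, discrete Chebyshev, etc.), each parameterized by $O(n)$ scalars and supporting $\tilde{O}(n)$-time multiplication via FFT-style techniques. Sample $k=\mathrm{polylog}(n)$ independent structured matrices $F_1,\ldots,F_k$ from $\mathcal{F}$ with random parameters, together with random diagonal (or sparse) matrices $D_i,E_i$ whose role is to break residual algebraic structure so that no row, column, or small submatrix of $M$ is immediately recognizable as structured. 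Set $M=\sum_{i=1}^k D_i F_i E_i$, with trapdoor consisting of the tuple $(F_i,D_i,E_i)_{i=1}^k$.

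The efficiency claims then follow immediately from the trapdoor structure. For any input $v$, compute $Mv=\sum_i D_i F_i (E_i v)$ using $k$ fast matrix-vector multiplications and $O(k)$ diagonal scalings, totaling $\tilde{O}(kn)=\tilde{O}(n)$ work. Preprocessing need not materialize $M$ explicitly: storing the $O(kn)=\tilde{O}(n)$ trapdoor parameters already suffices for the protocol, and if the scheme requires the explicit entries of $M$ to be written down they can be produced in the trivial $O(n^2)$ time, well within the $\tilde{O}(n^2)$ budget.

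The hard part, and the reason the statement is posed as a conjecture rather than a theorem, is security. Statistical pseudorandomness is hopeless since $M$ carries only $\tilde{O}(n)$ bits of entropy whereas a uniform matrix carries $\Theta(n^2\log|R|)$; cryptographic pseudorandomness must therefore be reduced to a computational hardness assumption. The natural target is an LPN-style reduction in which any polynomial-time distinguisher between $M$ and a uniform matrix yields an algorithm that recovers, or at least detects, the structured decomposition. One route is to introduce a new ``sum-of-structured-matrices'' hardness assumption directly, analogous in spirit to the quasi-cyclic and Toeplitz LPN variants noted in \cite{boyle2020efficient}, and justify it through resistance to known algebraic, lattice, and information-set-decoding attacks. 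A second route is to engineer the construction so that pseudorandomness reduces to standard LPN over $R$, treating the diagonal masks as a large-entropy source whose product with structured factors flattens the output distribution. The main obstacle is that no existing assumption precisely covers this setting, so the bulk of the work consists of cryptanalysis of whichever assumption one proposes; the follow-up work \cite{benhamouda2025encrypted} cited in the excerpt appears to follow exactly this pattern, which supports the plausibility of the conjecture while underscoring that its rigorous justification requires new cryptographic infrastructure rather than a direct proof from existing primitives.
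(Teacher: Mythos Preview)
The paper does not prove this statement; it is explicitly labeled a \emph{conjecture} and left open. The only justification the paper offers is the single sentence preceding it: that families such as Vandermonde, Cauchy, and discrete Chebyshev matrices admit $\tilde{O}(n)$ multiplication, and that linear combinations of such matrices ``could plausibly be pseudorandom.'' A footnote then points to subsequent work \cite{benhamouda2025encrypted} as providing conjectured-secure constructions of this type.

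Your proposal is therefore not really a proof attempt but an expanded version of the paper's own plausibility heuristic, and you correctly identify this yourself when you write that security ``is the reason the statement is posed as a conjecture rather than a theorem.'' Your construction $M=\sum_i D_i F_i E_i$ is a concrete instantiation of exactly the idea the paper gestures at, with the added (reasonable) wrinkle of diagonal masks to break obvious structure. The efficiency analysis is fine and matches the paper's intent. The security discussion is honest: you note that any such construction has only $\tilde{O}(n)$ entropy and hence needs a new computational assumption, and you sketch two routes (a bespoke sum-of-structured-matrices assumption, or an attempted reduction to LPN) without claiming either goes through. This is the correct assessment of the situation and is consistent with the paper's stance; there is nothing to compare against because the paper offers no proof.
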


The complexity of Targeted-Trapdoored-Matrix in the case where $m=1$ (i.e. Targeted-Trapdoored-Vector) is less clear. The security assumption of Section \ref{sec:lpn-assumption} is commonly believed to hold for any $\delta,\epsilon>0$. Security of LPN in the dimension regime required for polylogarithmic overhead, however, is less plausible; if a polylogarithmic-overhead scheme exists, it will likely require another approach.

\section{An LPN Hardness Assumption and Performance Scaling}
\label{sec:lpn-assumption}
Given a ring $R$, a dimension $n$, a subspace dimension $\subspacedim$, and a noise rate $\noiserate$, the decisional LPN problem asks one to distinguish messages of the form $(L, LH+S)$ from messages of the form $(L, U)$ where $L\sampledfrom R^{n\times\subspacedim}$, $H\sampledfrom R^{\subspacedim\times \poly(n)}$, $U\sampledfrom R^{n\times \poly(n)}$, and $S\sampledfrom \mc{D}_{R, \noiserate}^{n\times \poly(n)}$ where $\mc{D}_{R, \noiserate}$ is the distribution that is 0 with probability $1-\noiserate$ and uniform over $R$ with probability $\noiserate$. A standard hardness assumption for LPN (as in \cite{boyle2018compressing,yu2021smoothing}) is the following:
\begin{conjecture}[Standard LPN Hardness Assumption]
    \label{conj:lpn-hardness}
    $\exists \epsilon<1$ such that fixing any $\delta>0$, there exists $n^*\in \N, c>0$ such that for any $n>n^*, \subspacedim=\delta n$, $\noiserate = n^{\epsilon-1}$, no polynomial-time adversary can distinguish the distribution of tuples $(L, LH+S)$ from that of $(L, U)$ with non-negligible advantage, where $L\sampledfrom R^{n\times\subspacedim}$, $H\sampledfrom R^{\subspacedim\times \poly(n)}$, $U\sampledfrom R^{n\times \poly(n)}$, and $S\sampledfrom \mc{D}_{R, \noiserate}^{n\times \poly(n)}$.
\end{conjecture}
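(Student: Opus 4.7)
The statement is a cryptographic hardness \emph{conjecture}, so no unconditional proof is possible; the plan is to give the kind of evidentiary argument that standardly accompanies a new LPN variant. The two prongs are (i) reducing to previously analyzed LPN variants wherever possible, and (ii) ruling out every known attack family at the stated parameters.

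For (i), the regime $\subspacedim=\delta n$ with $\delta>0$ constant and $\noiserate=n^{\epsilon-1}$ with $\epsilon<1$ is precisely the ``linear-dimension, low-noise'' setting considered over $\F_2$ in \cite{boyle2018compressing,yu2021smoothing}. First I would record an explicit sample-preserving polynomial-time reduction from the decisional assumption over those papers' distributions to the one stated here (the only differences are the per-coordinate noise distribution and the base ring). Over a general LPN-friendly ring $R$, I would then invoke the standard coordinatewise-lifting argument: an attacker against the $R$ variant yields, by projection to the prime subring, an attacker against the $\F_p$ version; combined with the search-to-decision reduction in the style of \cite{ishai2009secure}, this reduces the stated problem to search LPN over $\F_p$, whose best-known complexities are surveyed in \cite{liu2024hardness}.

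For (ii), the attack families to rule out are: (a) BKW and its covering-code / sieving-style variants, whose complexity at linear-dimension secrets and noise $n^{\epsilon-1}$ is $2^{\Omega(n/\log n)}$ for $\epsilon$ bounded away from $1$; (b) information-set decoding (Prange, Stern, MMT, BJMM), which at these parameters runs in time $\exp\bigl(\Omega(\noiserate\cdot n)\bigr)=\exp(\Omega(n^\epsilon))$; (c) Arora--Ge-style algebraic/linearization attacks, which require the expected number of noisy positions per sample to be $O(\log n)$ and therefore demand $\noiserate=O(\log n/n)$, well below our noise rate; and (d) low-weight-dual-codeword attacks, which reduce to combinations of (a,b). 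Substituting $\subspacedim=\delta n$ and $\noiserate=n^{\epsilon-1}$ into the complexity bounds of each family shows that there is an $\epsilon<1$ below which all of them require superpolynomial time uniformly in the constant $\delta>0$; the conjecture then amounts to the assertion that such an $\epsilon$ exists, which is exactly what is claimed.

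The main obstacle is attack family (a): BKW's asymptotics at the exact combination $(\subspacedim,\noiserate)=(\delta n,n^{\epsilon-1})$ depend on delicate block-partitioning choices and, over non-binary rings, on ring-dependent Fourier structure; moreover, the quantifier order ($\epsilon$ chosen before $\delta$) forces the analysis to handle the least-favorable $\delta$ simultaneously. The cleanest way to discharge this is to import the explicit over-ring bounds compiled in \cite{liu2024hardness} and verify superpolynomial runtime directly at our parameters, optimizing $\epsilon$ against the worst-case $\delta$. A secondary obstacle is ensuring the $\F_p$-to-$R$ lifting does not expose algebraic structure enabling a fundamentally new attack (e.g., one exploiting zero divisors in $R$), which for any concrete LPN-friendly ring used in instantiations must be checked case by case.
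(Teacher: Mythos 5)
You have correctly identified that this statement is a hardness \emph{assumption}, not a theorem: the paper offers no proof of it and none is possible, instead simply asserting it as the standard LPN assumption with citations to \cite{boyle2018compressing,yu2021smoothing} and deferring quantitative attack analysis to \cite{liu2024hardness}. Your evidentiary survey of reductions and known attack families is consistent with, and more detailed than, what the paper does, so there is no gap to report.
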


Both $\delta$ and $\epsilon$ directly correspond to the parameters by the same names in Theorem \ref{thm:full}. In particular, slightly stronger assumptions directly translate into significant efficiency improvements. For instance, if $\delta$ is allowed to be $o(1)$, we achieve sub-multiplicative overhead for $\frac{m}{n}\geq \delta$, rather than just $m=\Omega(n)$. All terms of the form $n^\epsilon$ correspond to $n\mu$; if we assume that $\mu(n)=n^{\epsilon-1}$ is secure for every $\epsilon>0$, we obtain the corresponding substantial efficiency improvements.\footnote{Our LPN assumption is commonly believed to hold (nonuniformly) for all $\epsilon$, but our results are nontrivial for any $\epsilon<\omega-2$.} Likewise, the security of our protocol depends directly on the computational hardness assumption: a super-polynomial assumption would yield a protocol with super-polynomial security.

\section{A Simple Secure Delegated Matrix-Matrices Protocol}
We first use LPN to construct schemes for Trapdoored-Matrix and Targeted-Trapdoored-Matrix. Our approach is simple: we observe that a matrix of the form $LH+S$, where $L$ and $H$ have few columns and rows respectively and $S$ is sparse, is pseudorandom by the LPN assumption and is easy to multiply by other matrices. Moreover, given a matrix $A$, once $AL^\top$ has been computed, it is very efficient to compute $A(LH+S)$ as $(AL)H+AS$, even if $LH+S$ is a thin matrix (or vector). 

This results in a simple protocol for Secure Delegated Matrix-Matrices, detailed in Protocol~\ref{prot:simple}.

A brief note on notation: in order to aid the reader in tracking which parts of expressions have already been computed, we use angled brackets around an expression, such as $\inner{\mathrm{Expr}(x, y, z, \dots)}$, to denote an already-computed variable whose value is equal to that of the expression $\mathrm{Expr}(x, y, z, \dots)$.

\begin{protocol}{Simple Client Protocols for Matrix-Matrices}{simple}
\textbf{Inputs:} Ring $R$, left matrix $A\in R^{m\times n}$, subspace dimension $n_1$, sparsity $\noiserate$.
\bigskip

\textbf{Initialization:} 
\begin{stepenumerate}
    \item Sample LPN subspace matrix $L\sampledfrom R^{n\times n_1}$.
    \item Sample $H\sampledfrom R^{m\times n_1}, S\sampledfrom \mc{D}_{\noiserate}^{m\times n}$.
    \item Compute \[A'\coloneqq HL^\top+S.\]
    \item Compute and send $A_{\mathbf{enc}}\coloneqq A+A'$.
    \item Compute $A_{\mathbf{enc}}L, AL$.
\end{stepenumerate}
\bigskip
\textbf{Multiplication by $B\in R^{n\times l}$:}
\begin{stepenumerate}
    \item Sample $G\sampledfrom R^{n_1\times l}, T\sampledfrom \mc{D}_{\noiserate}^{n\times l}$.
    \item Compute \[B'\coloneqq LG + T.\] Compute and send $B_{\mathbf{enc}}\coloneqq B+B'$.
    \item Compute $AB'=\inner{AL}G + AT$, $L^\top B_{\mathbf{enc}}$
    \item Request $A_{\mathbf{enc}}B_{\mathbf{enc}}$.
    \item Compute \[A'B_{\mathbf{enc}}=H\inner{L^\top B_{\mathbf{enc}}} + SB_{\mathbf{enc}}.\]
    \item Return $AB=\inner{A_{\mathbf{enc}}B_{\mathbf{enc}}}-\inner{AB'}-\inner{A'B_{\mathbf{enc}}}$
\end{stepenumerate}
\end{protocol}

\subsection{Security and Performance}
The security of Protocol~\ref{prot:simple} is a direct consequence of the LPN assumption: the matrices $A'$ and $B'$ are constructed exactly as the matrix that the assumption holds to be pseudorandom conditioned on $L$, and the joint distribution of $(L, A+A', B+B')$ (from which all messages to the server can be derived) therefore leaks no information about $A$ and $B$.

Protocol~\ref{prot:simple} is moderately efficient, already allowing for client savings for large parameters. The initialization requires one $m\times \subspacedim$-by-$\subspacedim\times n$ matrix multiplication, one $m\times n$-by-$n\times\subspacedim$ matrix multiplication, and negligible additions. An online step has similar cost; in particular, letting $a, b, c$ be the ascending sort of $m, n, l$, and $a', b', c'$ be the ascending sort of $m, \subspacedim, l$, the cost for an online step is $O\left((a')^{\omega-2} b'c'+mnl\noiserate\right)$, while insecure delegation has cost $O\left((a)^{\omega-2} bc \right)$. For instance, in the case $m=n, l=1$ (corresponding to matrix-vectors), we drive the client cost down from $O(n^2)$ to $O(n(\subspacedim+n\noiserate))$. A server's online step merely consists of receiving two matrices of the problem sizes, computing the product, and returning it, so server computation time and round/bit efficiency of communication are optimal.

For most parameter regimes (e.g. $\delta=\Theta(1), \noiserate=o(1)$), the dominant cost comes from the dense multiplication $HL$, and therefore varies polynomially in subspace dimension $\subspacedim$. This can be greatly reduced by using LPN variants over fast matrices (e.g. sparse-LPN, Toeplitz-LPN, Quasi-cyclic-LPN \cite{aguilar2018efficient}), but precise parameter tradeoffs for these problems are not currently well-understood, and may entail worse $\mu$. In the next subsection, we demonstrate a more advanced protocol that obtains similar or better performance (depending on the comparative sparsity assumptions of LPN and sparse-LPN) than a sparse-LPN version, but depends only on standard LPN.

\section{An Improved Secure Delegated Matrix-Matrices Protocol}
\label{sec:improvedlpn}

Protocol~\ref{prot:simple} has client cost very roughly $(\mu+\delta)n$ times the cost of simply sending the inputs to the server. This is rather high. It is also not (just) an artifact of our weak parameter assumptions: the security of LPN does not degrade gracefully as its parameters are reduced. In particular, once $\delta\noiserate\leq \frac{1}{n}$, straightforward attacks from subsampling and Gaussian elimination begin to become efficient. This imposes a minimum multiplicative overhead of $\Theta(\sqrt{n})$ if super-polynomial security is desired. In this section, we present a more complicated protocol that bypasses this barrier to improve performance.

The client cost of Protocol~\ref{prot:simple} is dominated by the costs of the multiplications $AL$ and $(AL)G$. There is a natural solution: recursively call the server to compute these, while being careful to continue hiding $A$ and $G$. Additionally, computations where both matrices are public (for instance, those of the form $L(B+B')$) may be offloaded to the server. These changes result in a fairly complicated recursive protocol with several disadvantages, including a number of rounds growing exponentially in the recursion depth and substantial duplication of work, but this protocol in turn may be greatly optimized and simplified. We detail the resulting protocol for secure delegated matrix-matrices in Protocol~\ref{prot:improved} (and a non-interactive verification subroutine based on Freivalds' algorithm \cite{freivalds1979fast} in Protocol~\ref{prot:checkpartial}), and spend the rest of this section on exposition and showing that the protocol fulfills Theorem~\ref{thm:full}.

\begin{protocol}{Checking Partial Products}{checkpartial}
    \textbf{Inputs:} Ring $R$, matrices $M_i\in R^{a_i\times a_{i+1}}\forall i\in [d]$, matrices $P_i\in R^{a_1\times a_{i+1}}\forall i\in \{2, 3, \dots, d\}$, number of product checks $\lambda'$.
    \begin{stepenumerate}
        \item Let $P_1=M_1$.
        \item For each $i\in \{2, \dots, d\}$
        \begin{stepenumerate}
            \item Sample $X\sampledfrom R^{a_{i+1}\times \lambda'}$.
            \item If $P_i X\neq P_{i-1}(M_iX)$, return \textbf{False}.
        \end{stepenumerate}
        \item Return \textbf{True}.
    \end{stepenumerate}
\end{protocol}

\begin{protocol}{Improved Client Protocols for Matrix-Matrices}{improved}
\textbf{Inputs:} Ring $R$, left matrix $A\in R^{m\times n}$, subspace dimensions $(n=n_0), n_1, \dots, n_d$, sparsities $\mu_1, \dots, \mu_d$, number of product checks $\lambda'$.
\medskip

\textbf{Initialization:} 
\begin{stepenumerate}
    \item Sample and send LPN subspace matrices $L_i\sampledfrom R^{n_{i-1}\times n_i}\forall i\in [d]$.
    \item For each pair $(i, j)\in \{0, 1, \dots, d\}^2$, request the partial product $L_j^\top L_{j-1}^\top \dots L_1^\top L_1 L_2 \dots L_i$. Check these via Protocol~\ref{prot:checkpartial}, and abort if any are incorrect.
    \item Sample $H\sampledfrom R^{m\times n_d}, S_i\sampledfrom \mc{D}_{\noiserate_i}^{m\times n_{i-1}}\forall i\in [d]$.
    \item Compute \[A'\coloneqq H\inner{L_1L_2\dots L_d}^\top + \sum_{i=0}^{d-1} S_{i+1}\inner{L_1L_2\dots L_i}^\top .\]
    \item Compute and send $A_{\mathbf{enc}}\coloneqq A+A'$.
    \item Request and check via Protocol~\ref{prot:checkpartial} the partial products $A_{\mathbf{enc}}L_1$, $A_{\mathbf{enc}}L_1L_2$, \dots, $A_{\mathbf{enc}}L_1L_2\dots L_d$. Abort if any are incorrect.
    \item Compute, for every $i\in [d]$,
    \begin{align*}
        A'L_1L_2\dots L_i&=H\inner{L_d^\top L_{d-1}^\top \dots L_1^\top L_1L_2\dots L_i}\\
        &\quad + \sum_{j=0}^{d-1} S_{j+1}\inner{L_j^\top L_{j-1}^\top \dots L_1^\top L_1L_2\dots L_i}\\
        AL_1L_2\dots L_i &= \inner{A_{\mathbf{enc}}L_1L_2\dots L_i} - \inner{A'L_1L_2\dots L_i}.
    \end{align*}
\end{stepenumerate}
\medskip
\textbf{Multiplication by $B\in R^{n\times l}$:}
\begin{stepenumerate}
    \item Sample $G\sampledfrom R^{n_d\times l}, T_i\sampledfrom \mc{D}_{\noiserate_i}^{n_{i-1}\times l}\forall i\in [d]$.
    \item Compute \[B'\coloneqq \inner{L_1L_2\dots L_d}G + \sum_{i=0}^{d-1} \inner{L_1L_2\dots L_i} T_{i+1}.\] Compute and send $B_{\mathbf{enc}}\coloneqq B+B'$.
    \item Compute $AB'=\inner{AL_1L_2\dots L_d}G + \sum_{i=0}^{d-1} \inner{AL_1L_2\dots L_i} T_{i+1}$.
    \item Request $A_{\mathbf{enc}}B_{\mathbf{enc}}$ and partial products $(L_1)^\top B_{\mathbf{enc}}$, $(L_1L_2)^\top B_{\mathbf{enc}}$, \dots, $(L_1L_2\dots L_d)^\top B_{\mathbf{enc}}$.
    \item Compute \[A'B_{\mathbf{enc}}=H\inner{(L_1L_2\dots L_d)^\top B_{\mathbf{enc}}} + \sum_{i=0}^{d-1} S_{i+1}  \inner{(L_1L_2\dots L_i)^\top B_{\mathbf{enc}}}.\]
    \item Return $AB=\inner{A_{\mathbf{enc}}B_{\mathbf{enc}}}-\inner{AB'}-\inner{A'B_{\mathbf{enc}}}$
\end{stepenumerate}
\end{protocol}

\subsection{Protocol Structure}

The structure of Protocol~\ref{prot:improved} closely parallels that of Protocol~\ref{prot:simple}. The key difference is that pseudorandom matrices are now constructed via iteration of the LPN construction; for instance,
\begin{align*}
    B' &\coloneqq L_1L_2\dots L_d G + \sum_{i=0}^{d-1} L_1L_2\dots L_{i} T_{i+1}\\
    &=(T_1 + L_1(T_2+L_2(T_3+L_3(\dots(T_d+L_dG))))).
\end{align*}

This change makes the client-side dense multiplications substantially smaller, improving efficiency. It does, however, require precomputation of various products of the LPN subspace matrices and $A$. Luckily, these products can themselves be offloaded to the server (including by a partially recursive masking trick, as seen on line 7), and owing to the thinness of the subspace matrices, have low cost relative to the target multiplications $AB_i$. The client must check the honesty of the server on these multiplications; Protocol~\ref{prot:checkpartial}, a slight generalization of Freivalds' algorithm \cite{freivalds1979fast}, does so efficiently.

\subsection{Security and Correctness Guarantees}

Our protocol makes three guarantees. First, in the semi-honest model, the protocol will return exactly $AB$. Second, the protocol is computational zero-knowledge against even a dishonest server. Third, we obtain a weaker form of correctness of outputs: a server that behaves dishonestly in any setup query or fraction $\geq \alpha$ of the online invocations can be detected with high probability with $O\left(\frac{1}{\alpha}\right)$ additional queries (and if $l$ and $|R|$ are large, much more efficient per-call checks are available via Freivalds' algorithm).

The first guarantee follows from expanding the return line and noting that it is equal to $AB$. We emphasize that all LPN noise \textit{exactly} cancels; unlike some LWE-based protocols, we introduce no error in the output.

For the security guarantees, we first notice that Freivalds' multiplication check (and therefore Protocol~\ref{prot:checkpartial}'s iterated version) will return true with probability 1 if the products are correct and probability $|R|^{-\lambda'}$ otherwise, where $\lambda'$ is the number of trials performed. Note that this provides low soundness error at very low computational cost for large rings: for 32-bit integers, for instance, we achieve soundness error $2^{-128}$ when $\lambda'=4$. Assuming that $\lambda'$ is picked so that this probability is negligible, we demonstrate that Protocol~\ref{prot:improved} is CZK.

\begin{theorem}
    Given the LPN security assumption of Section \ref{sec:lpn-assumption}, Protocol~\ref{prot:improved} is computational zero-knowledge for the server.
\end{theorem}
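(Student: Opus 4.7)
The plan is to exhibit a simulator $\mathcal{S}$ whose output is computationally indistinguishable from the real transcript seen by the server, and to justify indistinguishability through a hybrid argument that peels off one LPN layer at a time.

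First I would enumerate the server's view. Across initialization and any polynomial number of online calls, everything the server receives is one of: (i) the subspace matrices $L_1,\dots,L_d$; (ii) the encrypted left matrix $A_{\mathbf{enc}}=A+A'$; (iii) for each online call $k$, the encrypted right matrix $B_{\mathbf{enc}}^{(k)}=B_k+B'_k$; (iv) the Freivalds check matrices $X$ sampled inside Protocol~\ref{prot:checkpartial}; and (v) partial-product requests, all of which are deterministic functions of data the server has already seen (products of $L_i$'s, or $A_{\mathbf{enc}}$ or $B_{\mathbf{enc}}^{(k)}$ with such products). The simulator $\mathcal{S}$ samples the $L_i$'s and the $X$'s exactly as in the protocol (they are independent of the secret inputs), and samples $A_{\mathbf{enc}}$ and each $B_{\mathbf{enc}}^{(k)}$ uniformly at random from the appropriate matrix space. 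No private data is needed.

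To show indistinguishability, rewrite
\begin{align*}
    B'_k &= T^{(k)}_1 + L_1\bigl(T^{(k)}_2 + L_2\bigl(\cdots + L_{d-1}\bigl(T^{(k)}_d + L_d G^{(k)}\bigr)\bigr)\bigr), \\
    A' &= S_1 + \Bigl(S_2 + \bigl(S_3 + \cdots + (S_d + H L_d^\top)L_{d-1}^\top\cdots\bigr)L_2^\top\Bigr)L_1^\top .
\end{align*}
I would run a hybrid from the innermost layer outward. In the base hybrid, the innermost expression $T^{(k)}_d + L_d G^{(k)}$ is precisely an LPN sample in dimension $n_{d-1}$ with subspace dimension $n_d$ and noise rate $\mu_d$; Conjecture~\ref{conj:lpn-hardness} renders it computationally indistinguishable from a uniform $U^{(k)}_d$ conditional on $L_d$. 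Substituting $U^{(k)}_d$ for $T^{(k)}_d+L_d G^{(k)}$ yields an expression whose new innermost layer $T^{(k)}_{d-1}+L_{d-1}U^{(k)}_d$ is again a fresh LPN sample. Iterating $d$ times shows $B'_k$ is indistinguishable from uniform given $L_1,\dots,L_d$, hence so is $B_{\mathbf{enc}}^{(k)}$. An entirely analogous hybrid argument handles $A'$, working from the right (from the $HL_d^\top$ factor outward).

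The main obstacle is bookkeeping in two directions. First, many calls share the $L_i$'s, so the reduction must invoke LPN with a single subspace matrix and a polynomial number of right-hand columns; this is immediate from the decisional LPN assumption of Section~\ref{sec:lpn-assumption}, which is stated against $\mathrm{poly}(n)$ samples. Second, we must be careful that at every hybrid step the quantity being replaced is, conditional on what the distinguisher has seen so far, genuinely an LPN sample with freshly sampled noise and an independent subspace matrix; proceeding strictly from the innermost layer outward for $B'_k$ and from the $H$ end outward for $A'$ achieves this, because each outer $L_j$ has not yet been used in the argument when it is invoked. After all $2d$ hybrids, each online view is identically distributed to $\mathcal{S}$'s output up to a sum of $O(d)$ LPN-advantage terms, which is negligible in $\lambda$ by the chosen parameters, establishing CZK.
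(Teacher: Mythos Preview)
Your argument is correct and follows essentially the same layered hybrid as the paper. Two small points of comparison. First, the paper collapses your $2d$ hybrids to $d$ by observing that $(A')^\top$ has exactly the same nested-LPN form as each $B'_k$ (with $H^\top,S_i^\top$ playing the roles of $G^{(k)},T_i^{(k)}$); it therefore concatenates $[\,A^\top\mid B^{(1)}\mid\cdots\mid B^{(t)}\,]$ into one block and peels all layers simultaneously, which also removes the need to argue separately that the reduction can honestly compute $A_{\mathbf{enc}}$ while hybriding on the $B$-side. Correspondingly, the paper's simulator simply runs the client on inputs $A=0,\,B_i=0$ rather than sampling $A_{\mathbf{enc}},B_{\mathbf{enc}}^{(k)}$ uniformly. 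Second, for CZK against a \emph{dishonest} server you should state that the simulator also runs the Freivalds checks and aborts on failure, so that the abort event is (statistically) matched between real and simulated executions; the paper makes this explicit, and your description of $\mathcal{S}$ stops at ``samples the $X$'s'' without saying it evaluates the checks.
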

\begin{proof}
    We construct a server-view simulator. The simulator is quite simple: it runs a copy of the client with inputs $A=0, B_i=0$. The $L_i$ generated by the simulator are identically distributed to those generated by the true client. If the server misreports any of the products $L_j^\top L_{j-1}^\top \dots L_1^\top L_1L_2\dots L_i$ or $A_{\mathbf{enc}}L_1L_2\dots L_i$, both the simulator and a true client will detect the error and terminate the protocol with probability $\geq 1-|R|^{-\lambda'}$, so the complete views in this case are statistically indistinguishable. 
    
    Otherwise, letting $G^{(j)}, S_i^{(j)}$ be the particular $G, S_i$ used in generation of $B'$ in the $j$th online step, the rest of the client's messages to the server can be concatenated into the matrix
    \begin{align*}
        D + \inner{L_1\dots L_d} X + \sum_{i=0}^{d-1} \inner{L_1\dots L_i} Y_{i+1}
    \end{align*}
    where 
    \begin{align*}
        D&=\begin{bmatrix}
            A^\top&B^{(1)}&B^{(2)}&\dots &B^{(t)}
        \end{bmatrix}\\
        X&=\begin{bmatrix}
            H^\top & G^{(1)}& G^{(2)}& \dots&G^{(t)}
        \end{bmatrix}\\
        Y_i&=\begin{bmatrix}
            S_i^\top& T^{(1)}_i &T^{(2)}_i& \dots& T^{(t)}_i
        \end{bmatrix}
    \end{align*}
    Note in particular that $X\sampledfrom R^{n_d\times \poly(n)}, Y_i\sampledfrom \mc{D}_{\mu_i}^{n_{i-1}\times \poly(n)}$. Then the server view is contained in the tuple 
    \begin{align*}
        \left((L_i)_{i\in [d]}, D + \inner{L_1\dots L_d} X + \sum_{i=0}^{d-1} \inner{L_1\dots L_i} Y_{i+1}\right)
    \end{align*}
    
    It therefore suffices to show that, for any $D, D'$, the distribution (over $L_i, X, Y$) of tuples of the form above is computationally indistinguishable from that of those of the form
    \begin{align*}
        \left((L_i)_{i\in [d]}, D'+\inner{L_1\dots L_d} X + \sum_{i=0}^{d-1} \inner{L_1\dots L_i} Y_{i+1}\right)
    \end{align*}
    (i.e. where we replace $D$ with $D'$) for any $D, D'$. The following lemma shows that any such distribution is indistinguishable from the distribution resulting from replacing the second tuple element with independent and uniform noise, and thus from that generated by any other $D'$, completing the proof.
\end{proof}

\begin{lemma}
    Fix any $D$. For $j\in \{1, \dots, d\}$, let
    \begin{align*}   
        \mc{E}_j&=\left\{\left(\left(L_i\right)_{i\in [d]},  D+ L_1L_2\dots L_j X_j + \sum_{i=0}^{j-1} (L_1 L_2 \dots L_{i})Y_{i+1}\right) \middle| \right. \\
        &\left.\phantom{\Bigg(} L_i\sampledfrom R^{n_{i-1}\times n_{i}},X_j\sampledfrom R^{n_j\times \poly(n)}, Y_i\leftarrow \mc{D}^{n_{i-1}\times \poly(n)}_{\mu_i}\right\}
    \end{align*}

    Let 
    \begin{align*}
        \mc{E}_0&=\left\{\left(\left(L_i\right)_{i\in [d]},  D+U\right) \middle|L_i\sampledfrom R^{n_{i-1}\times n_{i}}, U \sampledfrom R^{n_0\times \poly(n)}\right\}\\
        &=\left\{\left(\left(L_i\right)_{i\in [d]},  U\right) \middle|L_i\sampledfrom R^{n_{i-1}\times n_{i}}, U \sampledfrom R^{n_0\times \poly(n)}\right\}
    \end{align*}

    Under the assumption of Section~\ref{sec:lpn-assumption} and assuming $n_i\geq \nu(\delta, \epsilon, \lambda), n_i\geq \delta n_{i-1}, \mu_i\geq n_{i-1}^{\epsilon-1} \forall i\in [d]$, $\lambda=\Omega(\lg n)$, $\mc{E}_d$ is $(\lambda-\lg d)$-bit computationally indistinguishable from the noise distribution $\mc{E}_0$.
\end{lemma}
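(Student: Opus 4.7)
The plan is to give a hybrid argument peeling off one layer of the iterated LPN construction at a time, moving from $\mc{E}_d$ down to $\mc{E}_0$ through the intermediate distributions already defined in the statement. The essential algebraic observation that powers each single step is that in $\mc{E}_j$ the two terms that depend on $L_j$ can be combined:
\begin{align*}
    L_1 L_2 \cdots L_j X_j + (L_1 L_2 \cdots L_{j-1}) Y_j = (L_1 \cdots L_{j-1})\bigl(L_j X_j + Y_j\bigr).
\end{align*}
Here $L_j X_j + Y_j$ is exactly the pseudorandom half of an LPN instance with subspace matrix $L_j$, ambient dimension $n_{j-1}$, subspace dimension $n_j$, and noise rate $\mu_j$, and all other contributions to the second tuple element are independent of $(X_j, Y_j)$ given the rest of the $L_i$ and $Y_i$.

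Using this, I would construct a distinguisher for the LPN instance at level $j$ from any distinguisher between $\mc{E}_j$ and $\mc{E}_{j-1}$. Given an LPN challenge $(L_j, Z)$ where $Z$ is either $L_j X_j + Y_j$ or uniform, the reduction samples the remaining $L_i$ for $i \neq j$ and the remaining $Y_i$ for $i \neq j$ itself, and outputs the tuple
\begin{align*}
    \Bigl((L_i)_{i \in [d]},\; D + (L_1 \cdots L_{j-1}) Z + \sum_{i=0}^{j-2} (L_1 \cdots L_i) Y_{i+1}\Bigr).
\end{align*}
When $Z = L_j X_j + Y_j$ the tuple is exactly $\mc{E}_j$; when $Z$ is uniform in $R^{n_{j-1} \times \poly(n)}$, then $(L_1 \cdots L_{j-1}) Z$ is distributed as $(L_1 \cdots L_{j-1}) X_{j-1}$ for a fresh uniform $X_{j-1}$, so the tuple is $\mc{E}_{j-1}$. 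A union bound over the $d$ hybrid steps converts per-step advantages $\epsilon_j$ into a total advantage at most $\sum_j \epsilon_j$.

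The remaining work is parameter bookkeeping. The hypotheses $n_i \geq \nu(\delta, \epsilon, \lambda)$, $n_i \geq \delta n_{i-1}$, $\mu_i \geq n_{i-1}^{\epsilon - 1}$ are tailored so that each hybrid LPN instance meets the quantitative hardness assumption at the $(\lambda + \lg \lg n + 1)$-bit security level baked into the definition of $\nu$. Since our subspace dimension $n_j$ and noise rate $\mu_j$ both meet or exceed the baseline scaling at ambient dimension $n_{j-1}$, the hybrid instance at level $j$ is at least as hard as the baseline LPN at that dimension (appending random columns to $L$ and rows to the secret reduces a smaller-subspace hard instance to our larger-subspace one). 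Combined with the $d$-way hybrid loss and $\lambda = \Omega(\lg n)$, this gives total computational distance at most $d \cdot 2^{-(\lambda + \lg \lg n + 1)} \leq 2^{-(\lambda - \lg d)}$ for $n \geq 2$, as required. The main obstacle I anticipate is making this parameter reduction fully rigorous: since the definition of $\nu$ is phrased via an LPN instance at dimension $\iota/\delta$, one must carefully verify monotonicity of LPN hardness in subspace ratio and noise rate to transfer that quantitative guarantee to the hybrid instance at dimension $n_{j-1}$, especially when $n_{j-1}$ is strictly smaller than $n_j/\delta$.
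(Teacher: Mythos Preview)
Your proposal is correct and follows essentially the same hybrid argument as the paper: peel off one LPN layer at a time, embed an LPN challenge $(L_j,Z)$ into the $j$th hybrid by sampling the remaining $L_i$ and $Y_i$ and forming the second tuple element, and apply a union bound over the $d$ steps. Your algebraic observation $(L_1\cdots L_{j-1})(L_jX_j+Y_j)$ and the resulting reduction are exactly what the paper uses (stated there more tersely as ``enhance samples \dots\ by drawing the other random variables''); your parameter bookkeeping is in fact more explicit than the paper's, which does not spell out the $(\lambda-\lg d)$ accounting.
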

\begin{proof}
    We prove the contrapositive. Let there be an efficient algorithm $\mc{A}$ distinguishing $\mc{E}_0$ and $\mc{E}_d$ with advantage $\geq \sum_{i=1}^d \epsilon_i$. Then there exists $j\in [d]$ s.t. $\mc{E}_{j-1}$ and $\mc{E}_{j}$ are distinguished by $\mc{A}$ with advantage $\geq \epsilon_j$. Then there exists an efficient algorithm distinguishing the distributions 
    \begin{align*}
        \left\{(L_j, L_j X_j + Y_j)\middle|L_j\sampledfrom R^{n_{j-1}\times n_{j}}, X_j\sampledfrom R^{n_j\times \poly(n)}, Y_j\leftarrow \mc{D}^{n_{j-1}\times \poly(n)}_{\mu_j}\right\}
    \end{align*}
    and 
    \begin{align*}
        \left\{(L_j, U)\middle|L_j\sampledfrom R^{n_{j-1}\times n_{j}}, U\sampledfrom R^{n_{j-1}\times \poly(n)}\right\}
    \end{align*}
    with advantage $\epsilon_j$: enhance samples from the former into samples from $\mc{E}_j$ and samples from the latter into samples from $\mc{E}_{j-1}$ by drawing the other random variables and computing the second tuple element from the results, and then apply $\mc{A}$. This shows that the LPN assumption does not hold.
\end{proof}

For the final security guarantee, we observe that the only place that the server can deviate from protocol without high-probability detection is in line 4 of the online multiplication subroutine (as all other requests are checked by Protocol~\ref{prot:checkpartial}). To check for an $\alpha$-fraction of deviations in these queries, the client may introduce $\frac{c}{\alpha}$ multiplication queries with $B_i=0$. By linearity of the downstream sections of the protocol, the error introduced in either query propagates linearly into the result through $H$ and the $S_i$, and so if the server behaves dishonestly, the client will compute an incorrect result with high probability. The client may check this -- $AB_i=0$ if $B_i=0$ -- and given deviation rate $\alpha$, the probability of a deviation in at least one of the $\frac{c}{\alpha}$ checking queries is $\geq 1-e^{-c}$, so any significant error rate can be detected cheaply and with high probability.

\subsection{Performance}
See Appendix~\ref{sec:performance} for asymptotic and per-operation performance exposition.

\subsection{Practical Improvements}
\label{subsec:practical}

We proved bounds in the case where $\delta, \mu$ are constant across layers. This is inefficient; by shrinking $\delta, \mu$ in larger layers to get approximately uniform security, dramatic improvements (in the server's $o(1)$ term and the client's full cost) can be achieved.

Careful readers may note that this protocol's pseudorandom generation is very similar to a non-recursive instance of dual-LPN on the block syndrome matrix
\begin{align*}
    \left[L_1, L_1L_2, \dots, L_1L_2\dots L_d \right]
\end{align*}

It may initially appear that we can simply substitute dual-LPN for this entire process and gain major client performance and simplicity improvements. We can, but only at constant-factor (over unencrypted computation) cost to the server, as the server would need to multiply the syndrome matrix by $A_{\mathrm{enc}}, B_{\mathrm{enc}}$. We can, however, achieve nearly the same improvements by substituting dual-LPN for the \textit{final} recursive step (i.e. replacing the generation $L_dG+T_d$ by $PT'_d$). This entirely eliminates the client dense multiplication, which is quite costly in regimes where $n$ is small. This does increase the server overhead, but this increase is on the multiplicative order of $\delta^{d-1}$. As a result, client performance improves substantially (and server performance decreases marginally) when $n$ is close to $n_d$, which occurs when $n$ is low or particularly high security is desired.

\section{Experiments}
\label{sec:experiments}
We implemented Protocol~\ref{prot:improved} in C++ and benchmarked it for both matrix-matrix multiplication and repeated matrix-vector multiplication. Experiments ran on a single core of an AMD Ryzen 5800X3D processor. Parameters were chosen for 128-bit security, using updated versions of the estimates of \cite{liu2024hardness}. Results are in Tables~\ref{fig:matrix-vector-plus1} and~\ref{fig:matrix-matrix-plus1}. Client R. and Total R. denote the ratios of client and total compute time, respectively, to the unencrypted compute time (denoted Local). We noticed substantially worse performance when $n$ was a power of two, presumably due to cache associativity; therefore, we set $n$ to be one more than a power of two. Benchmarks with $n$ exactly a power of two are available in Tables~\ref{fig:matrix-vector} and~\ref{fig:matrix-matrix} of the appendix; global performance is substantially worse, but relative performance is fairly consistent.

\begin{table}
    \centering
    \begin{tabular}{|c||c|c|c|c|c|c|c|}
        \hline
        $n$ & Local (s) & Client Init. (s) & Server Init. (s) & Client (s) & Server (s) & Client R. & Total R.\\
        \hline\hline
        1025 & 0.00113 & 0.0000729 & 0.0000371 & 0.00117 & 0.00121 & 1.09 & 2.19 \\
        \hline
        2049 & 0.00520 & 0.000145 & 0.0000830 & 0.00306 & 0.00527 & 0.616 & 1.65 \\
        \hline
        4097 & 0.0570 & 0.000415 & 0.000388 & 0.00605 & 0.0576 & 0.113 & 1.13 \\
        \hline
        8193 & 0.268 & 0.000927 & 0.00173 & 0.0145 & 0.270 & 0.0573 & 1.07 \\
        \hline
        16385 & 1.27 & 0.00221 & 0.00467 & 0.0475 & 1.27 & 0.0392 & 1.04 \\
        \hline
    \end{tabular}
    \caption{Matrix-Vectors performance benchmarks. Client and server times separated into amortized initialization and per-online-step. Average of five trials. Initialization times amortized across $n$ operations. Note that some server time is conservatively attributed to the client in simulation, so server times are slightly below naive time for small operand sizes.}
    \label{fig:matrix-vector-plus1}
\end{table}

\begin{table}
    \centering
    \begin{tabular}{|c||c|c|c|c|c|}
        \hline
        $n$ & Local (s) & Client (s) & Server (s) & Client R. & Total R.\\
        \hline\hline
        1025 & 0.0751 & 0.232 & 0.121 & 3.08 & 4.69 \\
        \hline
        2049 & 0.514 & 1.11 & 0.775 & 2.16 & 3.67 \\
        \hline
        4097 & 4.46 & 5.79 & 6.15 & 1.30 & 2.68 \\
        \hline
        8193 & 61.9 & 36.7 & 95.4 & 0.592 & 2.13 \\
        \hline
        16385 & 511 & 200 & 675 & 0.392 & 1.71 \\
        \hline
    \end{tabular}
    \\[0.5em]  
    \caption{Matrix-Matrix performance benchmarks. Average of five trials.}
    \label{fig:matrix-matrix-plus1}
\end{table}

We see substantial performance improvements as $n$ grows in both contexts, but especially in matrix-vectors. Here, we achieve client improvements even at small $n$, with improvements increasing to factor-$25$ speedup for the client as we grow the matrix. Initialization costs are consistently negligible (due to the advantage of matrix-matrix over repeated matrix-vectors). In matrix-matrix, client improvements are less substantial, in large part due to the locality efficiencies available in dense (but not sparse) matrix multiplication. However, clients still achieve substantial advantage for medium-size matrices, up to nearly factor-three improvement for $n=2^{14}$.

To the best of our knowledge, our method is the first to provide practical efficiency gains for the client in this parameter regime when encryption time is taken into account. Our efficiency is over two orders of magnitude higher than that of \cite{bae2024plaintext}, despite encrypting both multiplicands while their approach encrypts one (note also that almost all of their cost is incurred by encryption/decryption).\footnote{Their work, in turn, significantly outperforms \cite{liu2022privacy}.} Our protocol may be straightforwardly modified to only encrypt one multiplicand; this results in a significant additional improvement in client performance and server overhead.

Performance comparison to the later work of \cite{benhamouda2025encrypted} is more complex.\footnote{Our initial draft did not contain experimental data at their time of publication, so their work was unable to include experimental comparisons.} We work over $\Z/2^{32}\Z$, while \cite{benhamouda2025encrypted} works over $\F_p$ for $p$ a $32$-bit prime (and compares to plaintext operations over that field), rendering relative-to-plaintext performance incomparable. In addition, results are highly sensitive to deviations in exact security assumptions. With these caveats, \cite{benhamouda2025encrypted} obtains significant performance improvements for smaller values of $n$ ($\leq 2048$), consistent with the increasing importance of some of their overhead improvements at those scales, while our methods are more efficient for larger $n$. Their methods also offer significant savings on client-side storage for matrix-vectors. The existence of constructions with tradeoffs suggests potential for future improvement.

We do not know of an efficient and well-motivated method for parameter search for optimal subspace dimensions/sparsities. Instead, we use a heuristic: at each step, aim for a constant ratio of subspace dimension to sparsity. We performed (fairly sparse) grid search for the optimal ratio for each $n$; therefore, our results merely upper-bound optimal runtimes.

\section{Conclusion and Further Questions}
We present natural definitions of Trapdoored-Matrix and Targeted-Trapdoored-Matrix objects, a method to turn instances of these objects into protocols for secure delegated linear algebra, and optimized constructions of all of the above. More recently, other constructions of the above primitives have been put forward, and the existence of still better constructions remains likely. Variants of our schemes may also be extensible to more general problems; harder problems in linear algebra, such as matrix polynomial evaluation, are a reasonable next step. Achieving a variety of operations comparable to those of \cite{boyle2020correlated,couteau2023pseudorandom,bae2024plaintext} (while maintaining efficiency gains) seems significantly more challenging, but would yield even broader applications.

\begin{credits}
\subsubsection{\ackname}
Research supported in part by the NSF Alan T. Waterman Award, Grant No. 1933331.

Thanks to Alex Lombardi for helpful discussion on LPN and related topics. Thanks to the TCC'25 reviewers for helpful comments and suggestions.
\end{credits}

\bibliographystyle{splncs04}  
\bibliography{main}

\appendix

\section{Performance Analysis of Protocol~\ref{prot:improved}}
\label{sec:performance}
\subsection{Asymptotic Analysis}
We present a high-level exposition of performance here. An analysis of exact operation counts is available in Appendix~\ref{sec:detailedperformance}.

Fix $\delta<\frac{1}{2}, \epsilon, \lambda$. Let $\nu$ be as stated in Section~\ref{sec:mainthm}. We assume $m, n \geq \nu$. Set $n_i=\delta n_{i-1}$, $\mu_i = n_{i-1}^{\epsilon-1}$, $n_d=\nu$. We have $\sum_{i=1}^d n_i\leq \frac{\delta n}{1-\delta}$, $\mu_i n_{i-1}\leq n_{i-1}^\epsilon$. We set $\lambda'=\frac{\lambda}{\lg |R|}$ to achieve statistical $\lambda$-bit security in multiplication checks.

For simplicity, we omit the costs of the many calls to Protocol~\ref{prot:checkpartial}, as they are never dominant in reasonable parameter regimes. They are included in Appendix~\ref{sec:detailedperformance}, but it suffices to observe that they are dominated by $O(1)$ multiplications of $O(m+n)\times O(n)$ matrices by $O(n)\times O(\lambda')$ matrices.

\paragraph{Initialization:} The cost to compute all product sequences of the $L_i$s is $O(n^\omega \delta^{\omega - 1})$ to the server (as this step is dominated up to constants by a $\delta n\times n$-by-$n\times \delta n$ multiplication, with cost $\frac{1}{\delta} (\delta n)^\omega$). The cost to compute $A'$ is $O(mn \nu^{\omega - 2})$ for the first operation and $O(mn\sum_{i=0}^d (n \delta^i)^\epsilon)=O(mn^{1+\epsilon})$ for the second. The cost to compute the partial products with $A_{\mathbf{enc}}$ is $O\left(\min(m, \delta n)^{\omega - 2} n\max(m, \delta n)\right)$. The cost to compute the partial products of $A'$ with the various $L_i$ is dominated by the cost to compute $A'$ itself. Total cost (omitting check terms) is therefore $O(n^\omega \delta^{\omega-1} + \min(m, \delta n)^{\omega - 2} n\max(m, \delta n))$ to the server and $O(n(m+n)(n^\epsilon + \nu^{\omega - 2}))$ to the client.

\paragraph{Multiplication:} Computation of $B', AB'$ costs $O(l(m+n) (\nu+n^\epsilon))$ to the client (we do not get $\nu^{\omega-2}$ as $l$ may be smaller than $\nu$; the bounds can be improved by splitting on this event). Computation of $A_{\mathbf{enc}}B_{\mathbf{enc}}$ costs $O(mnl \min(m, n, l)^{\omega-3})$, and partial products cost $O(n\max(l, \delta n) \min(l, \delta_n)^{\omega-2})$ to the server. Computation of $A'B_{\mathbf{enc}}$ costs $O(lm(\nu+n^\epsilon))$ to the client. Then the total cost is, apart from the large multiplication, $O(n\max(l, \delta n) \min(l, \delta n)^{\omega-2})$ to the server and $O(l(m+n) (\nu+n^\epsilon))$ to the client.

We expound on two special cases:
\subsubsection{Matrix Multiplication}
One natural case is $m=n=l$ and one online step (i.e. a once-off matrix-matrix multiplication). Here, total client cost is $O\left(n^2\left(n^\epsilon+\nu^{\omega-2}\right)\right)$ and total server cost is, apart from one $n\times n$-by-$n\times n$ matrix multiplication, $O(n^\omega \delta^{\omega-2})$. In particular, our protocol is $\left(n^\epsilon+\nu^{\omega-2}, 1+O(\delta^{\omega-2})\right)$-efficient, resulting in a multiplicative $n^{1-\epsilon}$ saving for the client (ignoring the additive term in security parameter) and small overhead for the server.

\subsubsection{Online Matrix-Vectors}
Another natural case is $m=n, l=1$, and $k$ online steps. Here, setup cost is the same $O\left( n^2\left( n^\epsilon+\nu^{\omega-2}\right)\right)$ for the client and $O\left(\delta n^\omega\right)$ for the server, which amortize across $n^\epsilon$ and $\delta n$ online rounds, respectively. The per-multiplication online cost is $O(n(n^{\epsilon} + \nu))$ for the client and, apart from one $n\times n$ matrix-vector multiplication, $O(\delta n^2)$ for the server. We therefore obtain (after amortization) the same $\left(n^\epsilon+\nu, 1+O(\delta^{\omega-2})\right)$-efficiency guarantee and small overheads. Note that the amortization occurs in a sublinear number of online steps; this holds because the initial multiplication benefits from fast matrix multiplication, while the online steps do not.

\subsection{Per-Operation Analysis}
\label{sec:detailedperformance}

We analyze the performance componentwise and with precise constants, as our protocol exhibits particularly low constants. For simplicity, we omit steps with negligible performance impact.

Let $\sparsedotcost(a, x)$ be the cost of multiplying a $a\times ?$ matrix by a length-$?$, $x$-sparse vector (considered to be independent of the $?$) and $\matmulcost(a, b, c)$ be the cost of multiplying two matrices, one of which is $a\times b$ and the other of which is $b\times c$. All operations are over our working ring $R$.

\paragraph{{Protocol~\ref{prot:checkpartial}:}} Costs $\sum_{i=2}^{d} \matmulcost(a_1, a_i, \lambda') + \matmulcost(a_i, a_{i+1}, \lambda') + \matmulcost(a_1, a_{i+1}, \lambda')$, plus equality checks. Note that the third multiplication can be blocked with either of the two previous.

\paragraph{{Protocol~\ref{prot:improved}, Initialization line 2:}} One way to order these calculations is to, for every $i\in [d]$, calculate \[L_j^\top L_{j-1}^\top \dots L_1^\top L_1 L_2\dots L_{i}=\inner{L_j^\top L_{j-1}^\top \dots L_1^\top L_1 L_2\dots L_{i-1}} L_i,\] for every $j\in [i-1]$ and then calculate \[L_i^\top L_{j-1}^\top \dots L_1^\top L_1 L_2\dots L_{i}=\inner{L_{i-1}^\top L_{j-1}^\top \dots L_1^\top L_1 L_2\dots L_{i}}^\top L_i.\]

The calculations in the inner loop can be batched. This gives cost \[\sum_{i=1}^d \left[\matmulcost\left(\sum_{j=1}^{i-1} n_j, n_{i-1}, n_i\right) + \matmulcost\left(n_i, n_{i-1}, n_i\right)\right]\] for the server, and 
\begin{align*}
    \sum_{i=1}^d &\left[\matmulcost\left(\sum_{j=1}^{i-1} n_j, n_{i-1}+n_i, \lambda'\right)+2\matmulcost\left(n_{i-1}, n_i, \lambda'\right) + \right] +\matmulcost\left(n_{i},n_i+n_{i-1}, \lambda'\right)\\
\end{align*}
for the client checks.

\paragraph{{Protocol~\ref{prot:improved}, Initialization line 4:}} This costs $\matmulcost(m, n_d, n) + m \sum_{i=0}^{d-1} \sparsedotcost(n, n_{i}\mu_{i+1})$.

\paragraph{{Protocol~\ref{prot:improved}, Initialization line 6:}} This costs the server $\sum_{i=1}^d \matmulcost(m, n_{i-1}, n_i)$, and checking costs the client $\sum_{i=1}^d \matmulcost(m, n_{i-1}+n_i, \lambda') + \matmulcost(n_{i-1}, n_i, \lambda')$.

\paragraph{{Protocol~\ref{prot:improved}, Initialization line 7:}} This costs 
\begin{align*}
    \sum_{i=1}^d \matmulcost(m, n_d, n_i) + \sum_{j=0}^{d-1} m\sparsedotcost(n_i, \mu_{j+1} n_j)
\end{align*}

\paragraph{{Protocol~\ref{prot:improved}, Multiplication line 2:}} This costs $\matmulcost(l, n_d, n) + l \sum_{i=0}^{d-1} \sparsedotcost(n, n_{i}\mu_{i+1})$.

\paragraph{{Protocol~\ref{prot:improved}, Multiplication line 3:}} This costs $\matmulcost(l, n_d, m) + l \sum_{i=0}^{d-1} \sparsedotcost(m, n_{i}\mu_{i+1})$.

\paragraph{{Protocol~\ref{prot:improved}, Multiplication line 4:}} This costs the server $\matmulcost(m, n, l)$ for the first product (note that this is exactly the naive multiplication cost, and is the only term of this order), and $\sum_{i=1}^d \matmulcost(l, n_{i-1}, n_i)$ for the rest by computing partial products.

\paragraph{{Protocol~\ref{prot:improved}, Multiplication line 5:}} This costs $\matmulcost(m, n_d, l) + m \sum_{i=0}^{d-1} \sparsedotcost(l, n_{i}\mu_{i+1})$.

\section{Benchmarks for power-of-two $n$}

\begin{table}
    \centering
    \begin{tabular}{|c||c|c|c|c|c|c|c|}
        \hline
        $n$ & Local (s) & Client Init. (s) & Server Init. (s) & Client (s) & Server (s) & Client R. & Total R.\\
        \hline\hline
        1024 & 0.00523 & 0.0000746 & 0.0000315 & 0.00396 & 0.00472 & 0.773 & 1.68 \\
        \hline
        2048 & 0.0239 & 0.000155 & 0.0000743 & 0.0121 & 0.0238 & 0.513 & 1.51 \\
        \hline
        4096 & 0.104 & 0.000420 & 0.000402 & 0.0138 & 0.106 & 0.137 & 1.16 \\
        \hline
        8192 & 0.422 & 0.000932 & 0.00172 & 0.0300 & 0.429 & 0.0732 & 1.09 \\
        \hline
        16384 & 1.65 & 0.00279 & 0.00849 & 0.0826 & 1.66 & 0.0519 & 1.06 \\
        \hline
    \end{tabular}
    \caption{Variant of Table~\ref{fig:matrix-vector-plus1} without offset $n$. Note that some server time is conservatively attributed to the client in simulation, so server times are slightly below naive time for small operand sizes.}
    \label{fig:matrix-vector}
\end{table}

\begin{table}
    \centering
    \begin{tabular}{|c||c|c|c|c|c|}
        \hline
        $n$ & Local (s) & Client (s) & Server (s) & Client R. & Total R.\\
        \hline\hline
        1024 & 0.0716 & 0.239 & 0.112 & 3.33 & 4.90 \\
        \hline
        2048 & 0.467 & 1.10 & 0.720 & 2.35 & 3.89 \\
        \hline
        4096 & 5.22 & 6.03 & 5.96 & 1.15 & 2.30 \\
        \hline
        8192 & 60.0 & 35.5 & 88.0 & 0.592 & 2.06 \\
        \hline
        16384 & 503 & 190 & 812 & 0.378 & 1.99 \\
        \hline
    \end{tabular}
    \\[0.5em]  
    \caption{Variant of Table~\ref{fig:matrix-matrix-plus1} without offset $n$.}
    \label{fig:matrix-matrix}
\end{table}

\end{document}